\newtheorem{proposition}{Proposition}
\newtheorem{lemma}{Lemma}
\newtheorem{remarque}{Remark}
\title{American Exchange options driven by a Lévy process}
\author{
 Marah Zakaria \\
  \texttt{marahzakaria1@gmail.com} \\
  %% examples of more authors
}
\begin{document}
\maketitle

\setstretch{1.125}

\begin{abstract}
We consider the problem of pricing American Exchange options driven by a L\'evy process.  We  study the properties of American Exchange options, we represented it as the sum of the price of the corresponding European exchange option price
and an early exercise premium. Secondly we show some properties of the free boundary and give an approximative formula of an American Exchange option.
\end{abstract}

\section{Introduction}
\subsection{Overview}
An exchange option is a contract that grants the holder the
right, but not the obligation, to exchange one risky asset for
another. It is predominantly used in foreign exchange, fixed‐income, and equity markets. In practice, many financial assets and real investment opportunities can be analyzed as American exchange options. In the Black and Scholes framework,
the price of the European exchange option is given by the celebrated (Margrabe \cite{Margrabe}) formula. Because of the limitations of the Black–Scholes framework, alternative asset price models have been proposed to provide more accurate characterizations
of asset returns. Some examples of these alternative
models are jump-diffusion models ( Pham \cite{Pham}, Lamberton \cite{Lamberton}). Under two
geometric Brownian motion processes, Broadie and Detemple \cite{Broadie} presented integral equations for early exercise
boundary and option prices for finite‐lived American spread and exchange options. Cheang and Chiarella \cite{Cheang}
presented a probabilistic representation for the American style exchange option under jump‐diffusion dynamics.\\
In this paper, we provide an extension to the results of
Margrabe \cite{Margrabe}, Cheang and Chiarella  \cite{Cheang}and Guanghua Lian et al \cite{Lian}to consider
the case where, asset prices
are driven by Lévy process. To facilitate the analysis we employs the
change-of-numéraire technique to obtain a representation that
is similar to the classical Margrabe  \cite{Margrabe} formula. Subsequently we found a  representation of European exchange option
prices in terms of the characteristic function. We also demonstrated that the American exchange
option price can also be represented as the sum of the
price of the corresponding European exchange option price
and an early exercise premium, similar to the findings of Cheang
and Chiarella \cite{Cheang} in Jump diffusion case, however Cheang and Chiarella did not show the regularity of American option to justify the use of Ito lemma. Unlike the above authors we were also able to show different properties of the free boundary thanks to the dimension reduction. Finally we give an approximate formula of an American exchange option. 
The paper is organized as follows. In subection. 1.2, we recall some basic facts about
Exchange option. In Sect. 2, we studie the American Exchange option. We obtain a decomposition of the American option value as the
sum of its corresponding European Exchange price and the early exercise premium. The ramaining parts are devoted to the properties of the exercise boundary. We first establish the continuity of the free boundary, then we study the limit of the critical price at maturity. We also provide an approximate formula for an
American exchange option, where
the dynamics of the underlying assets are driven by a L\'evy processes.
\subsection{Exchange option driven by L\'evy process}
Let X be a semimartingale on the stochastic basis $(\Omega,(\mathcal{F}_t)_{t\in \mathbb{R}^+}, \mathbb{P})$, with values in $\mathbb{P}$ and $X_0 = 0$. Suppose the stock price has the following dynamics:
\begin{align}
&dS_{i,t}=S_{i,t-} dX_{i,t},\\
&dX_{i,t}=(r-q_i)dt+\sigma dW_{i,t} + dZ_t^i\\
&Z^i_t=\int_0^t \int_{\mathbb{R}^2} (e^{y_i}-1)(J(ds,dy)-\nu (ds,dy))\\
&d<W^1,W^2>_t=\rho dt.
\end{align}
Where $W_{1,t}$ and $W_{2,t}$ are components of a bivariate correlated
Brownian motion process which is adapted to the filtration, where $d<W_1, dW_2>_t=\rho$,
and $\rho$ is the instantaneous correlation between the two Brownian motion components. 
The component $J(ds,dz)$ is a Poisson random measure with intensity measure  $\nu(dz)$. The measure $\nu$ is a positive Radon measure, called the Lévy measure of $L$, and it satisfies
\begin{align}\label{condition}
\int_{\mathbb{R}^2} e^{<u,y>} -1 \hspace{0.1cm}\nu (dy)<\infty,  \hspace{1cm} \forall u\in \mathbb{R}^2.
\end{align}
$J(ds,dy)-\nu (ds,dy)$ is the compensated Poisson random measure that corresponds to $J(ds,dy)$. \\
We will from now on assume $\mathbb{P}$ to be a risk-neutral measure and the interest
rate to be a constant r and a constant positive dividend
yield of $q_1$ and $q_2$, respectively, per annum.\\
We assume that the that the Lévy process is independent of the Brownian motions and of each
other. Henceforth, we assume that ${\mathcal{F}_t}$ is the natural filtration generated by the Brownian motions and the Lévy process.
Note that, in this framework, we have to consider payoff functions   which depend on both the time and the space variables. For example, in the case of a standard European exchange
option, the prise is $c (S_{1,t}, S_{2,t}, t, T)=\mathbb{E}^\mathbb{P}[(S_{1,t}-KS_{2,t})^+|\mathcal{F}_t]$. An American exchange option gives its owner the right to exchange one asset for another at any time prior to expiration. 
\begin{align*}
C(S_{1,t}, S_{2,t},t,T)=\sup_{\theta \in \mathcal{T}_{t,T}} \mathbb{E}^\mathbb{P}[e^{-r(\theta-t)}(KS_{1,t}-S_{2,t})^+|\mathcal{F}_t]
\end{align*}
where  $\mathcal{T}_{t,T}$ is the set of all stopping times with values in $[t, T]$. We define $\mathbb{Q}$  an adequate probability measure such that
\begin{align*}
\frac{d\mathbb{Q}}{d\mathbb{P}}=\frac{S_{2,t}e^{(q_2-r)t}}{S_{2,0} } 
\end{align*}
Under the measure $\mathbb{Q}$ we had that $R_t=\frac{e^{q_1t}S_{1,t}}{e^{q_2t}S_{2,t}}$ is a local martingale. As a conclusion the dynamics of the process $R_t$ is:
\begin{align*}
dR_t&=R_{t-}\big ( \sigma dW_t^{\mathbb{Q}}+ \int_{\mathbb{R}^2} (e^{y_1-y_2}-1)( J(dy,ds) - \tilde{\nu} (dy)ds\big),
\end{align*}
where $\sigma dW_t^{\mathbb{Q}}=\sigma_1 dW_{1,t}^{\mathbb{Q}} - \sigma_2 dW_{2,t}^{\mathbb{Q}}$ and $\tilde{\nu}=e^{y_2} \nu$.\\
Write $X_t = log (R_t )$ we obtain:
\begin{align*}
dX_t&=-\Big( \int_{\mathbb{R}^2} e^{y_1-y_2}-1 \tilde{\nu}(dy)-\int_{|y|<1} (y_1-y_2) \nu(ds,dy)+\frac{1}{2}\sigma^2\Big) dt\\
&+\sigma dW_t^\mathbb{Q}+\int_{|y|<1} (y_1-y_2) \tilde{J}(ds,dy)+\int_{|y|>1} (y_1-y_2) J(ds,dy),
\end{align*}
 Which is a L\'evy process with the characteristic exponent under $\mathbb{Q}$
\begin{align*}
f(u,X_t,t)=\mathbb{E}[e^{iuX_t}]=exp\Big (-t\Big [ \int_{\mathbb{R}^2} (e^{y_1-y_2}-1) \tilde{\nu}(dy)+\frac{1}{2}\sigma^2 \Big ]iu-\frac{1}{2}\sigma^2u^2t+ t\int_{\mathbb{R}^2} (e^{u(y_1-y_2)}-1) \tilde{\nu}(dy)\Big) 
\end{align*}
With the derived analytical‐form characteristic function, we can solve the pricing of a European exchange option as below
\begin{footnotesize}
\begin{align}\label{eq:5}
c (S_{1,t}, S_{2,t}, t, T)&=\mathbb{E}^\mathbb{P}[(S_{1,t}-KS_{2,t})^+|\mathcal{F}_t]\\
&=S_{2,t}e^{(q_2-q_1)t}\mathbb{E}^\mathbb{Q}[e^{-q_1(T-t)}(R_{T}- Ke^{(q_1-q_2)T})^+|\mathcal{F}_t]\\
                                   &=S_{1,t}e^{-q_1(T-t)}\Big(\frac{1}{2}+\frac{1}{\pi}\int_0^{\infty} Re\Big[ \frac{f(u-i,X_t,T-t)}{f(-i,X_t,T-t)iu}\Big]du\Big)\nonumber \\
&- KS_{2,t}e^{-q_2(T-t)}\Big(\frac{1}{2}+\frac{1}{\pi}\int_0^{\infty} Re\Big[ \frac{f(u,X_t,T-t)}{iu}\Big]du\Big)\nonumber
\end{align}
\end{footnotesize}
\begin{remarque}
In case where the underlying are under a jump diffusion dynamic,
\begin{align*}
&dS_{i,t}=S_{i,t-} dX_{i,t},\\
&dX_{i,t}=(r-q_i)dt+\sigma dW_{i,t} + dZ_t^i\\
&Z^i_t=\int_0^t \int_{\mathbb{R}^2} e^{y_i}-1 J(ds,dy)\\
&d<W^1,W^2>_t=\rho dt.
\end{align*}
The dynamic of $X_t$ is:
\begin{align*}
dX_t&=-\Big( \frac{1}{2}\sigma^2-\int_{|y|<1} (y_1-y_2) \nu(ds,dy)\Big) dt\\
&+\sigma dW_t^\mathbb{Q}+\int_{|y|<1} (y_1-y_2) \tilde{J}(ds,dy)+\int_{|y|>1} (y_1-y_2) J(ds,dy),
\end{align*}
and the characteristic exponent under $\mathbb{Q}$
\begin{align*}
f(u,X_t,t)=\mathbb{E}[e^{iuX_t}]=exp\Big (- \frac{t}{2}\sigma^2  iu-\frac{1}{2}\sigma^2u^2t+ t\int_{\mathbb{R}^2} (e^{u(y_1-y_2)}-1) \nu(dy)\Big).
\end{align*}
\end{remarque}

\section{Characterization of the American option}
\noindent The natural price at time t of an American option denoted by $C(t,S_{1,t},S_{2,t})$ is writen as
\begin{align*}
C(S_{1,t},S_{2,t},t,T)&=S_{2,t}e^{(q_2-q_1)t}\sup_{\theta \in \mathcal{T}_{t,T}}\mathbb{E}^{\mathbb{Q}}[e^{-q_1(\tau-t) }(Ke^{(q_1-q_2)\tau}-R_\tau)^+|\mathcal{F}_t]\\
&=\sup_{\theta \in \mathcal{T}_{t,T}} \mathbb{E}^\mathbb{P}[e^{-r(\theta-t)}(KS_{1,t}-S_{2,t})^+|\mathcal{F}_t]\\
&=S_{2,t}e^{(q_2-q_1)t}u^A(t,R_t).
\end{align*}
We define $\tilde{u}^A(t,x) =  u^A(t,e^x)$ for all $(t,x)\in [0,T]\times \mathbb{R}$.
The Hamilton-Jacobi-Bellman (HJB in short) equation associated 
with $u^A(t,r)$ is a variational inequality involving, at least heuristically, a nonlinear second order parabolic integrodifferential equation (see
Bensoussan, J.L. Lions (1982) \cite{Bensoussan})
\begin{align}\label{eq:HJB}
\begin{cases}
min\{-\partial_t u^A-\mathcal{L^R}u^A+q_1u^A,u^A-(Ke^{(q_1-q_2)t}-r)^+\}=0 \text{ }\forall (t,r)\in [0,T)\times \mathbb{R}_+\\
u^A(T,r)=(Ke^{(q_1-q_2)T}-r)^+
\end{cases}
\end{align}
where
\begin{align*}
\mathcal{L}^R u(t,r)&=\frac{(\sigma r)^2}{2}\partial_{xx} u(t,r)-q_1 u(t,r)\\ 
&+\int_{\mathbb{R}^2}u(t,re^{y_1-y_2})-u(t,r)-r\partial_x u(t,r)\Big(e^{y_1-y_2}-1\Big)\tilde{\nu}(dt,dy) \big),
\end{align*}
or in the logarithmic representation
\begin{align}
\begin{cases}
min\{-\partial_t u-\mathcal{L^X}\tilde{u}^A+q_1\tilde{u}^A,\tilde{u}^A-(Ke^{(q_1-q_2)t}-e^x)^+\}=0 \text{ }\forall (t,x)\in [0,T)\times \mathbb{R}\\
\tilde{u}^A(T,r)=(Ke^{(q_1-q_2)T}-e^x)^+
\end{cases}
\end{align}
where
\begin{align*}
\mathcal{L}^Xu=\frac{\sigma^2}{2}(\partial_{xx} u(t,x)-\partial_{x} u(t,x)) +\int_{\mathbb{R}^2}u(t,x+y_1-y_2)-u(t,x)-\partial_x u(t,x)\big( e^{y_1-y_2}-1)\tilde{\nu}(dy) \big),
\end{align*}
The following classical lemma will be useful to study the continuity of $\partial_r u^A$ see \cite{Ladyzenskaja} and \cite{Jaillet}
\begin{lemma}\label{lemma1}
Let $u(t, x)$ be a function of $\mathbb{R}^2$
in $\mathbb{R}$, having partial derivatives $\partial_t u$ and $\partial_{xx} u$
uniformly bounded on $\mathbb{R}^2$. So, $\partial_x u$ verifies a Holder condition of order $\frac{1}{2}$ in $t$ uniformly with respect to x.
\end{lemma}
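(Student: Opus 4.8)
The plan is to exploit the interplay between the two bounded second data $\partial_t u$ and $\partial_{xx}u$ through a single free length scale $\delta>0$ in the space variable, which will eventually be tuned to $\sqrt{|t-s|}$; this balancing is precisely what forces the exponent $\tfrac12$. Write $M_1=\sup_{\mathbb{R}^2}|\partial_t u|$ and $M_2=\sup_{\mathbb{R}^2}|\partial_{xx}u|$, both finite by hypothesis. Fix $x\in\mathbb{R}$ and two times $s<t$, and set $h=t-s$. The first step is to represent $\partial_x u$ by a first-order difference quotient in space with a controlled remainder. Applying Taylor's formula with integral remainder to $r\mapsto u(\tau,x+r)$ gives
\begin{align*}
\partial_x u(\tau,x)=\frac{u(\tau,x+\delta)-u(\tau,x)}{\delta}-\frac{1}{\delta}\int_0^{\delta}(\delta-r)\,\partial_{xx}u(\tau,x+r)\,dr,
\end{align*}
so the remainder is bounded by $\tfrac12 M_2\delta$ uniformly in $(\tau,x)$.

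Second, I would control the time oscillation of $u$ at fixed space points using the bound on $\partial_t u$: since $u(t,y)-u(s,y)=\int_s^t\partial_t u(\tau,y)\,d\tau$, we obtain $|u(t,y)-u(s,y)|\le M_1 h$ for every $y\in\mathbb{R}$. Subtracting the two representations of $\partial_x u(t,x)$ and $\partial_x u(s,x)$ at the same step $\delta$ and applying the triangle inequality, the two remainder terms contribute at most $M_2\delta$, while the difference of the two difference quotients, having numerator $[u(t,x+\delta)-u(s,x+\delta)]-[u(t,x)-u(s,x)]$, is bounded by $2M_1h/\delta$. This yields
\begin{align*}
|\partial_x u(t,x)-\partial_x u(s,x)|\le M_2\,\delta+\frac{2M_1 h}{\delta}.
\end{align*}

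The final and decisive step is the optimization over the free parameter $\delta$. Minimizing the right-hand side gives the choice $\delta=\sqrt{2M_1 h/M_2}$ and the balanced estimate
\begin{align*}
|\partial_x u(t,x)-\partial_x u(s,x)|\le 2\sqrt{2M_1 M_2}\,|t-s|^{1/2},
\end{align*}
with a constant independent of $x$, which is exactly the claimed uniform Hölder-$\tfrac12$ condition in $t$. The only real subtlety—the heart of the argument—is recognizing that the two error sources scale oppositely in $\delta$ (one linearly, one inversely), so that their product is scale-invariant and the optimal balance produces precisely the square-root rate; no regularity beyond the two stated bounds is required, and the argument is purely local so it transfers verbatim to any $(t,x)$ in $\mathbb{R}^2$.
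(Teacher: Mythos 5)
Your proof is correct. Note that the paper does not actually prove this lemma: it is stated as a classical result and deferred to the references \cite{Ladyzenskaja} and \cite{Jaillet}, so there is no internal proof to compare against. Your argument is precisely the classical one contained in those references: (i) represent $\partial_x u(\tau,x)$ by a forward difference quotient over a step $\delta$, with a Taylor remainder bounded by $\tfrac12 M_2\delta$ uniformly in $(\tau,x)$; (ii) bound the time increment $|u(t,y)-u(s,y)|\le M_1 h$ using the bounded time derivative; (iii) combine to get $|\partial_x u(t,x)-\partial_x u(s,x)|\le M_2\delta+2M_1h/\delta$ and optimize at $\delta\sim\sqrt{h}$, which is exactly where the exponent $\tfrac12$ comes from. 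The computation of the optimum, $\delta=\sqrt{2M_1h/M_2}$ giving the constant $2\sqrt{2M_1M_2}$, is accurate, and the constant is manifestly independent of $x$, as required.

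Two minor points, neither of which affects validity. First, the integral form of the Taylor remainder implicitly requires $r\mapsto\partial_{xx}u(\tau,x+r)$ to be integrable, while the hypothesis only gives boundedness; this is avoided entirely by using the Lagrange form $u(\tau,x+\delta)=u(\tau,x)+\delta\,\partial_x u(\tau,x)+\tfrac{\delta^2}{2}\partial_{xx}u(\tau,\xi)$ for some intermediate $\xi$, which yields the same bound $\tfrac12 M_2\delta$ with no measurability assumption. Second, your optimal $\delta$ is undefined in the degenerate case $M_2=0$; there the inequality $|\partial_x u(t,x)-\partial_x u(s,x)|\le M_2\delta+2M_1h/\delta$ still holds for every $\delta>0$, and letting $\delta\to\infty$ shows $\partial_x u$ is then independent of $t$, so the claimed Hölder bound holds trivially. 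With these cosmetic repairs your proof is a complete, self-contained justification of a statement the paper leaves to the literature.
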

\begin{proposition}\label{prop4}
The function $\partial_r u^A$ (resp $\partial_x\tilde{u}^A$) is continuous in $[0,T]\times \mathbb{R}$ (resp. in $[0,T]\times \mathbb{R}_+)$ and
\begin{align*}
\lim_{(s,r)\rightarrow (t,b(t))} \partial_r u^A(s,r)=-1
\end{align*}
\end{proposition}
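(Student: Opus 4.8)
The plan is to deduce the smooth-fit value $-1$ from the continuity of $\partial_r u^A$, so the real content is the continuity assertion; the limit at the boundary then follows almost for free. I would carry out the argument in the logarithmic variable, working with $\tilde u^A$ and the operator $\mathcal{L}^X$, since there the diffusion coefficient is constant and the parabolic estimates are cleaner; the chain rule $\partial_x\tilde u^A(t,x)=e^x\,\partial_r u^A(t,e^x)$ then transfers every regularity statement back to $u^A$. The strategy has three steps: (i) show that $\partial_t\tilde u^A$ and $\partial_{xx}\tilde u^A$ are locally uniformly bounded; (ii) invoke Lemma~\ref{lemma1} to obtain joint continuity of $\partial_x\tilde u^A$; (iii) read off the boundary value $-1$ from the explicit payoff on the exercise region.

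For step (i) I would follow the variational-inequality approach of Jaillet--Lamberton--Lapeyre \cite{Jaillet}. I introduce the penalized problem approximating \eqref{eq:HJB}, of the form $-\partial_t u_\varepsilon-\mathcal{L}^X u_\varepsilon+q_1 u_\varepsilon+\beta_\varepsilon(u_\varepsilon-g)=0$, where $g(t,x)=(Ke^{(q_1-q_2)t}-e^x)^+$ is the obstacle and $\beta_\varepsilon$ is a penalization term forcing $u_\varepsilon\ge g$ as $\varepsilon\to 0$, together with the terminal condition. Using the $L^p$ and Schauder-type a priori estimates for parabolic equations from Ladyzenskaja--Solonnikov--Ural'ceva \cite{Ladyzenskaja}, adapted to incorporate the nonlocal integral term, I would derive bounds on $u_\varepsilon$, $\partial_t u_\varepsilon$ and $\partial_{xx}u_\varepsilon$ that are uniform in $\varepsilon$ on compact sets, and then pass to the limit $\varepsilon\to0$ to transfer these bounds to $\tilde u^A$. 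Here the integrability condition \eqref{condition} is what keeps the jump integral a controlled lower-order contribution.

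In step (ii), once $\partial_t\tilde u^A$ and $\partial_{xx}\tilde u^A$ are bounded (after a standard localization so Lemma~\ref{lemma1} applies on the relevant region), Lemma~\ref{lemma1} yields that $\partial_x\tilde u^A$ is Hölder continuous of order $\tfrac12$ in $t$, uniformly in $x$; meanwhile the bound on $\partial_{xx}\tilde u^A$ makes $\partial_x\tilde u^A$ Lipschitz in $x$, uniformly in $t$. Combining the two estimates through $|\partial_x\tilde u^A(t,x)-\partial_x\tilde u^A(t',x')|\le L|x-x'|+C|t-t'|^{1/2}$ gives joint continuity of $\partial_x\tilde u^A$ on $[0,T]\times\mathbb{R}$, and the chain rule above delivers continuity of $\partial_r u^A$ on $[0,T]\times\mathbb{R}_+$.

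Finally, step (iii) is immediate given continuity. On the exercise region $\{\,(t,r):r\le b(t)\,\}$ one has $u^A(t,r)=Ke^{(q_1-q_2)t}-r$, hence $\partial_r u^A\equiv-1$ there; since every boundary point $(t,b(t))$ is a limit of exercise-region points and $\partial_r u^A$ is continuous, its value at $(t,b(t))$ must equal $-1$, and therefore the full limit $\lim_{(s,r)\to(t,b(t))}\partial_r u^A(s,r)=-1$ holds irrespective of the direction of approach. The main obstacle will be step (i): the classical parabolic $W^{2,1}$-estimates are built for purely local operators, so the integro-differential part must be handled as a bounded perturbation, uniformly in the penalization parameter, using \eqref{condition}; this is the only genuinely technical point, after which Lemma~\ref{lemma1} and the smooth-fit conclusion are routine.
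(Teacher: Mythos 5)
Your outline is correct and converges with the paper at the final two steps --- Lemma \ref{lemma1} plus the observation that joint continuity of $\partial_r u^A$ forces the boundary value $-1$, since $\partial_r u^A\equiv-1$ on the exercise region --- but you reach the crucial bounds on $\partial_t\tilde u^A$ and $\partial_{xx}\tilde u^A$ by a genuinely different route. The paper never penalizes. It obtains the first-order bounds directly and probabilistically from the optimal-stopping representation: Lipschitz continuity in $x$ because the payoff is Lipschitz and the L\'evy flow is additive (so $X^{t,x}_\tau-X^{t,x'}_\tau=x-x'$), and Lipschitz continuity in $t$ by comparing a stopping time $\tau$ for horizon $T-t$ with $\tau'=\tau\wedge(T-t')$ and using the moment bounds supplied by \eqref{condition}. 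The second-order bound then comes from the obstacle problem itself, with no approximation scheme: the one-sided inequality $(\partial_t+\mathcal{L}^R-q_1)u^A\le 0$ from \eqref{eq:HJB} bounds $\partial_{rr}u^A$ from above, the nonlocal term being controlled by $C|r|(1+\partial_x\tilde u^A)$, while convexity of $u^A(t,\cdot)$ --- implicit at this point, invoked explicitly later in the paper --- bounds it from below, so the distributional second derivative is a locally bounded function with no singular part on the free boundary. Your penalization scheme is the classical Jaillet--Lamberton--Lapeyre/Zhang path \cite{Jaillet,Zhang} and can be carried out, but essentially all of the work is concentrated in your step (i), which you defer: the estimates of \cite{Ladyzenskaja} are built for local operators, so before the jump integral can be absorbed as a bounded perturbation you need a priori Lipschitz bounds on $u_\varepsilon$ uniform in $\varepsilon$ (obtained, in effect, by the same probabilistic or maximum-principle arguments the paper applies directly to $u^A$), and you must additionally identify $\lim_{\varepsilon\to 0}u_\varepsilon$ with the value function, a verification step the paper's direct route never needs. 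What your route would buy, if executed, is stronger regularity information and a framework that generalizes; what the paper's route buys is economy, since the stopping-time representation hands over the first-order bounds almost for free. One point in your favor: your step (iii) spells out the passage from continuity to the limit $-1$, which the paper's proof leaves implicit.
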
 
\begin{proof}
we have
\begin{align*}
|\tilde{u}^A(t,x)-\tilde{u}^A(t,x')|&\leq C \mathbb{E}^\mathbb{Q}[sup_{\tau\in [t,T]}|X^{t,x}_{\tau}-X^{t,x'}_{\tau}|]\\
&\leq C  |x-x'|
\end{align*}
We now show continuity with respect to time for fixed $x$. Let $0 \leq t \leq t' \leq T$. Take $\tau \in \mathcal{T}_{T-t}$ and define $\tau'=\tau \wedge (T-t')$  . We note that $\tau' \in \mathcal{T}_{T-t'}$ and $\tau' \leq \tau \leq \tau\leq \tau' +t'-t$.
\begin{align*}
|u(t,x)-u(t',x)|&\leq C  sup_{\tau' \leq s\leq \tau' +t'-t} \mathbb{E}^\mathbb{Q}[|X^{t,x}_s-X^{t,x}_{\tau'}|].
\end{align*}
We have
\begin{footnotesize}
\begin{align*}
\mathbb{E}^\mathbb{Q}[|X_s^{x}-X_v^{x}|]&\leq \mathbb{E}^\mathbb{Q}[\Big|\Big( \int_{\mathbb{R}^2} e^{y_1-y_2}-1 \tilde{\nu}(dy)+\frac{1}{2}\sigma^2\Big)\Big| |s-v| ]+\mathbb{E}^\mathbb{Q}[\int_s^v \sigma dW_u]+\mathbb{E}^\mathbb{Q}[\int_s^t \int_{\mathbb{R}^2_+}(y_1-y_2)J(du,dy)]\\
&\leq C|s-v|+\mathbb{E}^\mathbb{Q}[Z_{1, s}-Z_{1, v}]-\mathbb{E}^\mathbb{Q}[Z_{2, s}-Z_{2, v}].
\end{align*}
\end{footnotesize} 
Where 
\begin{align*}
dZ_{i, s}=\int_{\mathbb{R}^2}y_i J(dy,ds)
\end{align*}
The condition \ref{condition} implies that the moments of $Z_{i,t}$ are
finite for all orders. Thus $Z_{i,t}$ is uniformly integrable. Since $Z_{i,t}$ is also continuous in probability, it is continuous in $L^1$ then $\mathbb{E}^\mathbb{Q}[Z_{i,s}-Z_{i,v}]\rightarrow 0$
then 
\begin{align*}
\mathbb{E}^\mathbb{Q}[|X_s^{x,v}-X_t^{x,v}|]&\leq C|s-v|
\end{align*}
and
\begin{align*}
|u(t,x)-u(t',x)|&\leq C|t'-t|
\end{align*}
we deduce that $\partial_x \tilde{u}^A$ is locally bounded in $[0, T] \times \mathbb{R}$
and $\partial_t \tilde{u}^{A}$ is  locally bounded in $(0, T] \times \mathbb{R}$. Which also gives  that  $\partial_r u^A$ is locally bounded in $[0, T] \times \mathbb{R}_+$
and $\partial_t  u^A$ is  locally bounded in $(0, T] \times \mathbb{R}_+$\\
Using  \ref{eq:HJB} we have$(\partial_t+ \mathcal{L}^R-q_1)u\leq 0.$ Then
\begin{align*}
 \frac{\sigma^2r^2}{2}\partial_{rr} u^A(t,r)&\leq q_1u^A(t,r)-\partial_t u^A(t,r)  \\
&-\int_{\mathbb{R}^2}u^A(t,re^{y_1-y_2})-u^A(t,r)-r\partial_r u^A(t,r)( e^{y_1-y_2}-1)\nu(dy) \big).
\end{align*}
We also have that 
\begin{align*}
 \big|\int_{\mathbb{R}^2}u^A(t,re^{y_1-y_2})-u^A(t,r)-r\partial_r u^A(t,r)( e^{y_1-y_2}-1)\tilde{\nu}(dy) \big) \leq C|r|(1+\partial_x \tilde{u}^A(t,x))
\end{align*}
where $C= \int_{\mathbb{R}^2} |e^{y_1-y_2}-1| \nu(dy)$, then 
$\partial_{xx}\tilde u^A(t,x)$ is locally bounded. Now we can apply the lemma \ref{lemma1} and we prove that $\partial_x \tilde{u}^A(t,x)$ is continuous in $[0,T]\times \mathbb{R}$ (resp $\partial_r u^A(t,r)$ is continuous in $[0,T]\times \mathbb{R}_+$
\end{proof}
\noindent A classical result shows that the domain  $[0,T] \times \mathbb{R}_{+}$ of the American put option price $u$ is therefore divided by the optimal-stopping boundary $\{(t, b(t)), t \in [0, T]\}$ into:
\begin{itemize}
\item The continuation region:
\begin{align*}
\mathcal{C}^p&=\{ (t,r) \in (0, T]\times \mathbb{R}_{+}, u(t,r)>Ke^{(q_1-q_2)t}-r \}\\
&=\{ (t,r) \in (0, T]\times \mathbb{R}_{+}, r>b(t) \}
\end{align*}
\item The exercise(Stopping) region:
\begin{align*}
\mathcal{S}^p&=\{ (t,r) \in (0, T]\times \mathbb{R}_{+}, u(t,r)=Ke^{(q_1-q_2)t}-r \}\\
&=\{ (t,r) \in (0, T]\times \mathbb{R}_{+}, r\leq b(t) \}.
\end{align*}
\end{itemize} 
\noindent and the variational inequality \ref{eq:HJB} can be written as
\begin{align*}
\begin{cases}
\partial_t u^A(t,r)+\mathcal{L}^Ru^A(t,r)-q_1u^A(t,r) =0 \hspace{1cm} \forall [0,T)\times (b(t),\infty)\\
u^A(t,r)=(Ke^{(q_1-q_2)t}-r)\hspace{1cm} \forall [0,T)\times (0,b(t))\\
u^A(T,r)=(Ke^{(q_1-q_2)T}-r)^+
\end{cases} 
\end{align*}
\begin{remarque}
Equation \ref{eq:HJB} gives us that $\partial_t u^A(t,r)+\mathcal{L}^Ru^A(t,r)-q_1u^A(t,r) \leq 0$   \hspace{0.4cm} $\forall [0,T)\times (0,b(t))$ 
\end{remarque}
\noindent In the next proposition, we show that the price of the American
exchange option, can be decomposed into a sum of the European
exchange option price and the early exercise premium. 
\begin{proposition}
The value of the American Exchange options has the representation
\begin{footnotesize}
 \begin{align*}
C (S_{1,t}, S_{2,t}, t, T)&=c (S_{1,t}, S_{2,t}, t, T)-q_1S_{2,t}e^{(q_2-q_1)t}\int_t^T\mathbb{E}^{\mathbb{Q}} \Big[ 1_{R_s\leq b(s)}(Ke^{(q_1-q_2)s}-R_{s-})|\mathcal{F}_t\Big]ds\\
&+S_{2,t}e^{(q_2-q_1)t}\int_t^T\mathbb{E}[\int_{b(s)e^{y_2-y_1}<R_{s-}<b(s)} u^A(s,R_{s-}e^{y_1-y_2})+Ke^{(q_1-q_2)s}-R_{s-}e^{y_1-y_2}
\tilde{\nu} (dy)ds|\mathcal{F}_t]ds.
\end{align*}
\end{footnotesize}
\end{proposition}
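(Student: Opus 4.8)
The plan is to reduce the statement to the one-dimensional value function $u^A$ and to derive an early-exercise-premium decomposition by combining the Itô--Lévy formula with the variational inequality \eqref{eq:HJB}. Recall that $C(S_{1,t},S_{2,t},t,T)=S_{2,t}e^{(q_2-q_1)t}u^A(t,R_t)$, and introduce the corresponding European value $u^{E}(t,r):=\mathbb{E}^{\mathbb{Q}}[e^{-q_1(T-t)}(Ke^{(q_1-q_2)T}-R_T)^{+}\mid R_t=r]$, so that $c(S_{1,t},S_{2,t},t,T)=S_{2,t}e^{(q_2-q_1)t}u^{E}(t,R_t)$. It therefore suffices to prove the analogous identity for $u^A(t,R_t)$ and then multiply through by $S_{2,t}e^{(q_2-q_1)t}$. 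The regularity established in Proposition \ref{prop4} (continuity of $\partial_r u^A$ and local boundedness of $\partial_{rr}u^A,\partial_t u^A$) is precisely what will license a change-of-variable formula for $u^A$ — the step that Cheang and Chiarella left unjustified.

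First I would apply the Itô--Lévy formula to the discounted process $s\mapsto e^{-q_1(s-t)}u^A(s,R_s)$ on $[t,T]$. Since $R$ is a $\mathbb{Q}$-local martingale whose generator is the operator in \eqref{eq:HJB}, this yields
\[
e^{-q_1(T-t)}u^A(T,R_T)=u^A(t,R_t)+\int_t^T e^{-q_1(s-t)}\big(\partial_s+\mathcal{L}^R-q_1\big)u^A(s,R_{s-})\,ds+M_T,
\]
where $M$ collects the Brownian stochastic integral and the compensated-jump integral. Condition \eqref{condition} guarantees finiteness of all moments of the jump part, so $M$ is a genuine martingale; taking $\mathbb{E}^{\mathbb{Q}}[\,\cdot\mid\mathcal{F}_t]$ kills it, and the terminal condition $u^A(T,\cdot)=(Ke^{(q_1-q_2)T}-\cdot)^{+}$ identifies $\mathbb{E}^{\mathbb{Q}}[e^{-q_1(T-t)}u^A(T,R_T)\mid\mathcal{F}_t]$ with $u^{E}(t,R_t)$. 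Rearranging gives
\[
u^A(t,R_t)=u^{E}(t,R_t)-\mathbb{E}^{\mathbb{Q}}\Big[\int_t^T e^{-q_1(s-t)}\big(\partial_s+\mathcal{L}^R-q_1\big)u^A(s,R_{s-})\,ds\,\Big|\,\mathcal{F}_t\Big].
\]

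Next I would localise the integrand using the variational inequality. By \eqref{eq:HJB} the operator $(\partial_s+\mathcal{L}^R-q_1)u^A$ vanishes on the continuation region $\{R_{s-}>b(s)\}$, so only the stopping set $\{R_{s-}\le b(s)\}$ contributes, and there $u^A(s,r)=Ke^{(q_1-q_2)s}-r$. On that set the differential (local) part of the operator, applied to this affine function, produces the smooth premium density that, after the factor $S_{2,t}e^{(q_2-q_1)t}$, gives the first exercise-premium integral of the statement. For the nonlocal part the essential observation is that substituting the affine form reduces the $\tilde{\nu}$-integrand to the gap $u^A(s,R_{s-}e^{y_1-y_2})-\big(Ke^{(q_1-q_2)s}-R_{s-}e^{y_1-y_2}\big)$ between the continuation value and the immediate payoff at the post-jump state. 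This gap is zero whenever $R_{s-}e^{y_1-y_2}\le b(s)$ (where $u^A$ again coincides with the payoff) and is continuous across the boundary, so the jump contribution survives only on $\{R_{s-}e^{y_1-y_2}>b(s)\}$. Intersecting with $\{R_{s-}<b(s)\}$ gives exactly the region $\{b(s)e^{y_2-y_1}<R_{s-}<b(s)\}$ appearing in the statement, and collecting the two premia yields the claimed decomposition.

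The main obstacle is justifying the Itô--Lévy formula itself, since American value functions are generically only $C^1$ — and not $C^2$ — across the free boundary $\{r=b(t)\}$, where $\partial_{rr}u^A$ may be discontinuous, so the classical formula does not apply verbatim. I would handle this exactly as the regularity analysis suggests: Proposition \ref{prop4} gives $\partial_r u^A$ continuous and $\partial_{rr}u^A$ locally bounded, i.e.\ $u^A$ lies locally in $W^{2,\infty}$, so I would invoke a Meyer--Itô/Sobolev version of the change-of-variable formula, or equivalently mollify $u^A$ into $u^A_{\varepsilon}$, apply the smooth formula, and let $\varepsilon\to0$. The limiting step requires dominated convergence for the compensated-jump integral and the nonlocal generator term; here the Lipschitz bound $|u^A(s,re^{y_1-y_2})-u^A(s,r)|\le C\,r\,|e^{y_1-y_2}-1|$ coming from Proposition \ref{prop4}, together with the integrability built into condition \eqref{condition}, supplies the uniform domination. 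A secondary technical point, also settled by \eqref{condition}, is confirming that $M$ is a true (not merely local) martingale so that its conditional expectation indeed vanishes.
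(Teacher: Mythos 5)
Your proposal follows essentially the same route as the paper's own proof: reduce to the one-dimensional value $u^A$ via the numéraire change, apply the generalized Itô--Lévy formula (licensed by the regularity in Proposition \ref{prop4}), kill the martingale part by taking $\mathbb{Q}$-conditional expectations, and use the variational inequality \ref{eq:HJB} to confine the drift integrand to the stopping region, where the affine form of $u^A$ makes the nonlocal term survive exactly on $\{b(s)e^{y_2-y_1}<R_{s-}<b(s)\}$. The differences are only in rigor and are in your favor --- you work with the explicitly discounted process $e^{-q_1(s-t)}u^A(s,R_s)$, verify that the compensated-jump integral is a true martingale, and justify the change of variables by mollification rather than by directly invoking piecewise smoothness --- so this is the paper's proof, carried out more carefully.
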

\begin{proof}
Knowing that $dR_t=R_{t-}\Big ( \sigma dW_t^\mathbb{Q}+ \int_{\mathbb{R}^2} (e^{y_1-y_2}-1) (J-\tilde{\nu})(dt,dy) \Big)$ and applying generalized Ito’s lemma (because the function $u^A$ is $\mathcal{C}^1$, piecewise $\mathcal{C}^2$ in $x$, and piecewise $\mathcal{C}^1$ in $t$ see proposition \ref{prop4}). we obtain
\begin{align*}
u^A(T,R_T)&=u^A(t,R_t)+\int_t^T \Big(\partial_{s} +\mathcal{L}^R\Big)u^A(t,R_{s-})ds
+\int_t^T \partial_{x} u(t,R_{s-})dR_s
\end{align*}
Separating the American option value into the regions, $u^A(t,R_t)=1_{R_t>b_t} u^A(t,R_t)+1_{R_t\leq b_t} (Ke^{(q_1-q_2)t}-R_t)$ we have 
\begin{footnotesize}
\begin{align*}
u^A(T,R_T)=e^{-q_1(T-t)}(R_T-Ke^{(q_1-q_2)T})^+&=u^A(t,R_t)+\int_t^T \Big(\partial_{s} +\mathcal{L}^R\Big) \big(1_{R_s>b(s)}u^A(s,R_s)+1_{R_s\leq b(s)}(R_s-Ke^{(q_1-q_2)s})\big)\\
&+\int_t^T\partial_R\big(1_{R_s>b(s)}u^A(s,R_s)+1_{R_s\leq b(s)}(R_s-Ke^{(q_1-q_2)s})\big)dR_s.
\end{align*}
\end{footnotesize}
On the continuation region, $u^A(t,R_t)$ satisfies $\Big(\partial_t+\mathcal{L}^R-q_1\Big) u^A(t,R_t)=0$. Taking expectation with respect to the measure $\mathbb{\mathbb{Q}}$ and using the fact that $R_t$ is a martingale we have 
\begin{footnotesize}
\begin{align*}
u^E(t,R_t)&=u^A(t,R_t)-q_1\int_t^T\mathbb{E}^{\mathbb{Q}} \Big[ 1_{R_s\leq b(s)}(Ke^{(q_1-q_2)s}-R_s)|\mathcal{F}_t\Big]ds\\
&+\int_t^T\mathbb{E}[\int_{b(s)e^{y_2-y_1}<R_{s-}<b(s)} u^A(s,R_{s-}e^{y_1-y_2})+Ke^{(q_1-q_2)s}-R_{s-}e^{y_1-y_2}
\tilde{\nu} (dy)ds|\mathcal{F}_t]ds.
\end{align*}
\end{footnotesize}
This conclude the proof of the proposition.
\end{proof}
\begin{remarque}
In case if the underlyings are driven by a jump diffusion, the american option can be represented as following:
\begin{footnotesize}
 \begin{align*}
C (S_{1,t}, S_{2,t}, t, T)&=c (S_{1,t}, S_{2,t}, t, T)-q_1S_{2,t}e^{(q_2-q_1)t}\int_t^T\mathbb{E}^{\mathbb{Q}} \Big[ 1_{R_s\leq b(s)}(Ke^{(q_1-q_2)s}-R_{s-})|\mathcal{F}_t\Big]ds\\
&+S_{2,t}e^{(q_2-q_1)t}\int_t^T\mathbb{E}[\int_{b_se^{y_2-y_1}<R_{s-}<b_s} u^A(s,R_{s-}e^{y_1-y_2})+Ke^{(q_1-q_2)s}-R_{s-}
\nu (dy)ds|\mathcal{F}_t]ds.\\
&-S_{2,t}e^{(q_2-q_1)t}\int_t^T\mathbb{E}[\int_{R_{s-}<b_s} R_{s-}(e^{y_1-y_2}-1)
\nu (dy)ds|\mathcal{F}_t]ds.
\end{align*}
\end{footnotesize}
\end{remarque}
\subsection{Properties of the free boundary}
Throughout this section we will prove some properties of the free boundary. Notice that since $t\mapsto u^A(t,.)$ is non increasing the function $t\mapsto b(t)$ is non-decreasing. 
\begin{proposition}
 For $t \in [0,T)$, we have $b(t) > 0$.
\end{proposition}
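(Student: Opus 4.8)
The plan is to argue by contradiction, exploiting that the exercise payoff is a \emph{strict subsolution} of the pricing equation near $r=0$, the strict negativity being produced precisely by the dividend yield $q_2>0$. As a preliminary I would pin down the boundary value at $r=0$: since $R$ solves $dR_t=R_{t-}(\dots)$ the level $0$ is absorbing, so from $R_t=0$ the payoff equals $Ke^{(q_1-q_2)s}$ and $u^A(t,0)=\sup_{\tau}\mathbb{E}^{\mathbb{Q}}[e^{-q_1(\tau-t)}Ke^{(q_1-q_2)\tau}]=Ke^{(q_1-q_2)t}$, the supremum being attained at $\tau=t$ exactly because $q_2>0$. Writing $\phi(s,r)=(Ke^{(q_1-q_2)s}-r)^{+}$ for the obstacle, the Lipschitz bound of Proposition \ref{prop4} then gives $u^{A}(t,0^{+})=\phi(t,0^{+})$, so $r=0$ already sits on the boundary of the stopping region.

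Next I would compute $(\partial_t+\mathcal{L}^R-q_1)\phi$ on the affine branch $r<Ke^{(q_1-q_2)t}$ of the obstacle. The second–order term vanishes, $\partial_t\phi=(q_1-q_2)Ke^{(q_1-q_2)t}$, and the nonlocal term collapses: its integrand is identically $0$ whenever the post-jump point $re^{y_1-y_2}$ stays below the kink, leaving only $\int_{\{re^{y_1-y_2}\ge Ke^{(q_1-q_2)t}\}}(re^{y_1-y_2}-Ke^{(q_1-q_2)t})\,\tilde{\nu}(dy)\ge 0$, which tends to $0$ as $r\to0^{+}$ by the exponential–moment condition \ref{condition}. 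Collecting terms yields $(\partial_t+\mathcal{L}^R-q_1)\phi(t,r)\to -q_2Ke^{(q_1-q_2)t}<0$ as $r\to0^{+}$. This strict negativity, driven by $q_2>0$, is the analytic heart of the statement and mirrors the classical fact that an exchange option is exercised early only in the presence of a dividend.

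Finally comes the contradiction. Suppose $b(t_0)=0$ for some $t_0<T$. Since $b\ge 0$ and $t\mapsto b(t)$ is nondecreasing, $b\equiv 0$ on $[0,t_0]$, so the whole strip $(0,t_0]\times(0,\infty)$ lies in the continuation region, where $(\partial_t+\mathcal{L}^R-q_1)u^A=0$ by \ref{eq:HJB}. Setting $v=u^A-\phi\ge 0$ gives $(\partial_t+\mathcal{L}^R-q_1)v=-(\partial_t+\mathcal{L}^R-q_1)\phi\to q_2Ke^{(q_1-q_2)t_0}>0$ as $r\to0^{+}$. On the other hand $v(s,0^{+})=0$ for every $s\le t_0$, so letting $r\to0^{+}$ annihilates every term on the left: $\partial_t v(\cdot,0^{+})=0$ because $v(\cdot,0^{+})\equiv 0$, the term $\tfrac{\sigma^2 r^2}{2}\partial_{rr}v\to 0$ because $\partial_{rr}u^A$ is locally bounded by Proposition \ref{prop4}, $q_1 v\to 0$, and the nonlocal term vanishes by dominated convergence. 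Hence the left-hand side tends to $0$, forcing $0=q_2Ke^{(q_1-q_2)t_0}>0$, which is absurd; therefore $b(t_0)>0$.

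The step I expect to be the main obstacle is legitimising the limit $r\to0^{+}$ inside the integro-differential equation: one must control, up to the boundary, the terms $\partial_t u^A$ and $r^2\partial_{rr}u^A$ (from the local bounds of Proposition \ref{prop4}) and show that $\int[v(s,re^{y_1-y_2})-v(s,r)-r\partial_r v\,(e^{y_1-y_2}-1)]\,\tilde{\nu}(dy)\to 0$, which I would do by dominating with $|e^{y_1-y_2}-1|$ (integrable against $\tilde{\nu}$ by \ref{condition}) together with $v(s,0^{+})=0$. A fully probabilistic variant sidesteps the differentiation: localise by $\theta_\delta=\inf\{s>t_0:R_s\ge\delta\}$ with $\delta<\inf_{[t_0,T]}Ke^{(q_1-q_2)s}$, apply the It\^o--Dynkin formula to $e^{-q_1(s-t_0)}\phi(s,R_s)$ on $[t_0,\theta_\delta]$ (no local-time term appears, since the continuous part never reaches the kink), and use $(\partial_t+\mathcal{L}^R-q_1)\phi<0$ on $(0,\delta)$ to conclude that immediate exercise is optimal for small $r$; there the delicate point is the dynamic-programming treatment of the value at the exit point $R_{\theta_\delta}$.
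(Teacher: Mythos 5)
Your proposal is not a paraphrase of the paper's argument; it is a genuinely different proof, and in one important respect a sharper one. The paper kills the time derivative first: invoking the (asserted, not proved) monotonicity of $t\mapsto u^A(t,\cdot)$ it passes to the elliptic inequality $(\mathcal{L}^X-q_1)\tilde{u}^A\ge 0$ on the whole strip, then tests this against the sliding mollifiers $\phi(x)=\lambda\chi(\lambda x)$ whose mass escapes to $x=-\infty$; since only $\|\tilde{u}^A\|_\infty$ and $\tilde{u}^A\ge(Ke^{(q_1-q_2)t}-e^x)$ enter, every operator term dies in the limit $\lambda\to 0$ and the obstacle term survives, yielding $0\ge q_1Ke^{(q_2-q_1)t}$. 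You instead keep the full parabolic operator, evaluate it on the obstacle near $r=0$ (getting $-q_2Ke^{(q_1-q_2)t}$), and pin down the boundary value $u^A(t,0^+)=Ke^{(q_1-q_2)t}$. Note the discrepancy: the paper's contradiction is driven by $q_1>0$, yours by $q_2>0$ --- and yours is the correct driver. After the numéraire change the problem is exactly an American put with discount rate $q_1$ on the martingale $R$ and strike $Ke^{(q_1-q_2)\tau}$, i.e.\ (substituting $\hat{R}_\tau=e^{-(q_1-q_2)\tau}R_\tau$) a classical put with interest rate $q_2$ and dividend yield $q_1$; if $q_2=0<q_1$ the stopped payoff $e^{q_1t}(K-e^{-q_1\tau}R_\tau)^+$ is a submartingale, early exercise is never optimal and $b\equiv 0$, so no correct proof can avoid using $q_2>0$. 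The paper reaches a $q_1$-contradiction only because of its monotonicity premise $\partial_t u^A\le 0$, which is precisely what fails when $q_1>q_2$: your own boundary identification $u^A(t,0^+)=Ke^{(q_1-q_2)t}$ is then \emph{increasing} in $t$. Your bookkeeping, which nets $\partial_t\phi=(q_1-q_2)Ke^{(q_1-q_2)t}$ against $-q_1Ke^{(q_1-q_2)t}$, is the sound one. (Both proofs, yours and the paper's, still lean on the nondecrease of $b$ to place the strip $(0,t_0]\times(0,\infty)$ in the continuation region; only your probabilistic variant could dispense with that.)

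The genuine gap is the one you flagged yourself: the passage $r\to 0^+$ inside the equation. Two steps fail as written. First, $\partial_t v(\cdot,0^+)=0$ does not follow from $v(\cdot,0^+)\equiv 0$; interchanging $\partial_t$ with the limit needs uniform-in-$r$ control, and boundedness of $\partial_t v$ is not enough (think of $v=r\sin(t/r)$). Second, ``$\tfrac{\sigma^2r^2}{2}\partial_{rr}v\to 0$ because $\partial_{rr}u^A$ is locally bounded'' is not right: the estimates of Proposition \ref{prop4} hold on the open quadrant and at best keep $r^2\partial_{rr}u^A$ \emph{bounded} as $r\to 0^+$, not vanishing. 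This matters because the dangerous scenario is exactly that the equation holds down to $r=0$ with $\tfrac{\sigma^2r^2}{2}\partial_{rr}u^A\to q_2Ke^{(q_1-q_2)t}$, which would absorb your would-be contradiction. To exclude it you must use the boundedness of $u^A$ near $r=0$ (if $r^2\partial_{rr}u^A\to c>0$ then $u^A\sim -\tfrac{2c}{\sigma^2}\log r$, contradicting $u^A\le Ke^{(q_1-q_2)t}$), or, more robustly, run your identity $(\partial_t+\mathcal{L}^R-q_1)v=-(\partial_t+\mathcal{L}^R-q_1)\phi$ in weak form: integrate against $\theta(t)\phi_\lambda(x)$ with $\theta\in\mathcal{C}^\infty_c$ in time and the paper's sliding mollifier in space, so that after integration by parts only $\|v\|_\infty$, the boundary value $v(\cdot,0^+)=0$, and your pointwise limit of $(\partial_t+\mathcal{L}^R-q_1)\phi$ are needed. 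In short, your parabolic computation repairs the paper's questionable monotonicity step, and the paper's adjoint/test-function device repairs your limit interchange; splicing the two yields a complete proof.
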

\begin{proof}
Suppose that $b(t^*) = 0$ for some
$t^* \in (0,T)$, we then have $b(t) = 0$ for $t \leq t^*$, and 
\begin{align*}
\tilde{u}^A(t,x)>(Ke^{(q_1-q_2)t}-e^x) \hspace{0.3cm}\text{and} \hspace{0.3cm}(\partial_t+\mathcal{L}^X-q_1)\tilde{u}^A=0 \hspace{0.3cm} \forall t\in (0,t^*)\times \mathbb{R}.
\end{align*}
We know  that $\tilde{u}^A(.,x)$ is non-increasing, then $(\mathcal{L}^X-q_1)\tilde{u}^A\geq 0.$ Let $\theta \in \mathcal{C}_{c}^{\infty}(0,t^*)$ and $\phi \in \mathcal{C}_{c}^{\infty}(\mathbb{R})$, then we have
\begin{align*}
\int_{(0,t^*)}\theta(t)\int_{\mathbb{R}}u(t,x)(\sigma^2(\partial_{xx}\phi(x)-\partial_{x}\phi(x))+ B^*(\phi)dx dt&\geq q_1\int_{(0,t^*)}\theta(t)\int_{\mathbb{R}}u(t,x)\phi (x) dx dt\\
&\geq q_1\int_{(0,t^*)}\theta(t)\int_{\mathbb{R}}(Ke^{(q_2-q_1)t}-e^x)\phi(x) dx dt
\end{align*}
where
\begin{align*}
B^*(\phi)=\int_{\mathbb{R}^2}\phi(x+y_1-y_2)-\phi(x)+\partial_x \phi(x)\big( e^{y_1-y_2}-1)\nu(dy)
\end{align*}
Let $\chi \in \mathcal{C}^{\infty}$ such that $supp(\chi) = [-1,0]$ and $\int \chi(x)dx=1$.  By setting $\phi(x)=\lambda \chi(\lambda x )$ then $q_1\int_{\mathbb{R}}(Ke^{(q_2-a_1)t}-e^x)\phi(x) dx =q_1K e^{(q_2-q_1)t}-\int_{\mathbb{R}}e^{x/ \lambda }\chi(x)dx$. Letting $\lambda \rightarrow 0$ we had
 \begin{align*}
q_1\int_{\mathbb{R}}(Ke^{(q_2-q_1)t}-e^x)\phi(x) dx \xrightarrow[\lambda \rightarrow 0] {}q_1K e^{(q_2-q_1)t},
\end{align*}
since $supp(\chi) = [-1,0]$, we had $\lim_{\lambda \rightarrow 0} \int_{\mathbb{R}}e^{x/ \lambda }\chi(x)dx=0$.
As $\tilde{u}^A(t,x)$ is bounded,
\begin{align*}
\int_{\mathbb{R}}\tilde{u}^A(t,x)(\sigma^2(\partial_{xx}\phi(x)-\partial_{x}\phi(x)) dx
&\leq \|u\|_{\infty}\int_{\mathbb{R}}\sigma^2(\lambda^2 \chi''(x)-\lambda \chi'(x)) dx\\
&\xrightarrow[\lambda\rightarrow 0]{}0.
\end{align*}
We also have
\begin{footnotesize}
\begin{align*}
\int_{\mathbb{R}}\tilde{u}^A(t,x)B^*(\phi)(x) dx \leq \|u^A\|_{\infty}\int_{\mathbb{R}^3} \chi(x+y_1-y_2)-\chi(x)+\lambda (e^{y_1-y_2}-1)\chi'(x)  \tilde{\nu}(dy)dx
\end{align*}
\end{footnotesize}
As $\chi \in \mathcal{C}^{\infty}$ then $|\chi(x+y_1-y_2)-\chi(x)+\lambda (e^{y_1-y_2}-1)\chi'(x)\big|\leq   2| \chi' \|_{\infty}\int_{\mathbb{R}^2}|y_1-y_2| +\lambda(e^{y_1-y_2}-1)\tilde{\nu}(dy)$ then by dominated convergence we have 
\begin{align*}
\int_{\mathbb{R}^2}\tilde{u}^A(t,x)B^*(\phi)(x) dx \xrightarrow[\lambda \rightarrow 0]{} 0
\end{align*}
We conclude by sending $\lambda \rightarrow 0$ we had $0> q_1$ which is a condraction 
\end{proof}
\begin{proposition}

For $t \in [0,T)$, we have $b(t)$ is continuous in $[0,T]$.

\end{proposition}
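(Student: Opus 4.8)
The plan is to exploit the monotonicity of $b$ established above: a non-decreasing function has one-sided limits at every point, so it suffices to rule out jumps. I would first dispose of right-continuity, which is essentially free, and then concentrate on left-continuity, which is the substantive part. Throughout write $g(t,r):=(Ke^{(q_1-q_2)t}-r)^+$ for the reduced payoff and $\mathcal{J}[v](r):=\int_{\mathbb{R}^2}\big(v(re^{y_1-y_2})-v(r)-rv'(r)(e^{y_1-y_2}-1)\big)\tilde{\nu}(dy)$ for the nonlocal part of $\mathcal{L}^R$.

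\textbf{Right-continuity.} Since $u^A$ and $g$ are continuous, the stopping region $\mathcal{S}^p=\{u^A=g\}$ is closed, i.e. the subgraph $\{r\le b(t)\}$ is closed; for a non-decreasing boundary this is exactly upper semicontinuity, $b(t_0^+)\le b(t_0)$. Combined with the trivial $b(t_0)\le b(t_0^+)$ this gives $b(t_0^+)=b(t_0)$, so $b$ is right-continuous with no further work.

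\textbf{Left-continuity by contradiction.} Suppose $b(t_0^-)<b(t_0)$ for some $t_0\in(0,T)$ and fix $r$ in the gap $(b(t_0^-),b(t_0))$. By monotonicity and right-continuity the point $(t,r)$ lies in the continuation region for $t<t_0$ and in the stopping region for $t\ge t_0$; in particular $u^A(t,r)>g(t,r)$ for $t<t_0$ while $u^A(t_0,r)=g(t_0,r)$, so $\partial_r u^A(t_0,r)=-1$. On the continuation side the equation $(\partial_t+\mathcal{L}^R-q_1)u^A=0$ together with the regularity of Proposition \ref{prop4} (smooth fit $\partial_r u^A\to-1$ on the gap, so the second-order term drops out in the limit) gives, after a short computation,
\begin{equation*}
\partial_t u^A(t_0^-,r)=q_1\,g(t_0,r)-\mathcal{J}[u^A(t_0,\cdot)](r).
\end{equation*}
Because $u^A\ge g$ with equality at $t_0$, the left time-derivative satisfies $\partial_t u^A(t_0^-,r)\le\partial_t g(t_0,r)$, which rearranges into the lower bound $\mathcal{J}[u^A(t_0,\cdot)](r)\ge q_2Ke^{(q_1-q_2)t_0}-q_1r$. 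On the stopping side, applying the variational inequality $(\partial_t+\mathcal{L}^R-q_1)u^A\le0$ at interior points $(t,r)$ with $t>t_0$ (where $u^A=g$, hence $\partial_t u^A=\partial_t g$) and letting $t\downarrow t_0$ yields the reverse inequality. The two bounds force
\begin{equation*}
\mathcal{J}[u^A(t_0,\cdot)](r)=q_2Ke^{(q_1-q_2)t_0}-q_1r\qquad\text{for all }r\in(b(t_0^-),b(t_0)).
\end{equation*}
Using $u^A(t_0,r)=Ke^{(q_1-q_2)t_0}-r$ and $\partial_r u^A(t_0,r)=-1$ on the gap, the left-hand side equals $\int_{\mathbb{R}^2}\big(u^A(t_0,re^{y_1-y_2})-Ke^{(q_1-q_2)t_0}+re^{y_1-y_2}\big)\tilde{\nu}(dy)$, whose $r$-derivative is $\int_{\mathbb{R}^2}e^{y_1-y_2}\big(\partial_r u^A(t_0,re^{y_1-y_2})+1\big)\tilde{\nu}(dy)\ge0$, since the value function is $1$-Lipschitz in $r$ with $\partial_r u^A\ge-1$ (the payoff has slope in $[-1,0]$ and $R^{t,r}$ is linear in $r$). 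Thus the left-hand side is non-decreasing in $r$ while the right-hand side is affine with strictly negative slope $-q_1$, which is impossible on the non-degenerate interval $(b(t_0^-),b(t_0))$. Hence no jump occurs; $b$ is left-continuous, and with right-continuity, continuous on $[0,T]$.

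\textbf{Main obstacle.} The delicate step is the passage to the limit in the integro-differential equation as $t\uparrow t_0$: one must justify that $\partial_{rr}u^A$ contributes nothing in the limit and that the limit may be taken inside the nonlocal operator $\mathcal{J}$. For the former I would use the local boundedness of $\partial_{rr}\tilde{u}^A$ from Proposition \ref{prop4} together with $\partial_r u^A\to-1$ on the gap, most safely by testing the continuation equation against $\phi\in\mathcal{C}_c^{\infty}$ supported in the gap and integrating in $r$, so that only $\partial_r u^A$ — and not a pointwise $\partial_{rr}u^A$ — enters. For the latter I would invoke condition \eqref{condition} and dominated convergence, exactly as in the proof that $b(t)>0$. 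The remaining manipulations are routine bookkeeping.
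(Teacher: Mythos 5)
Your proposal is correct and shares the paper's overall skeleton — right-continuity from closedness of the stopping region plus monotonicity of $b$; left-continuity by contradiction, squeezing an identity on the gap $(b(t_0^-),b(t_0))$ from the equation on the continuation side and from the variational inequality on the stopping side, then showing that identity cannot hold on a non-degenerate interval — but the two decisive steps are executed differently, and your versions are in fact the sound ones. First, the time derivatives: the paper discards $\partial_t u^A$ on both sides by monotonicity of $u^A$ in $t$; on the stopping side, however, $\partial_t u^A=\partial_t g=(q_1-q_2)Ke^{(q_1-q_2)t}$, whose sign is that of $q_1-q_2$, so the paper's inequality $(\mathcal{L}^X-q_1)\tilde{u}^A\leq 0$ — and hence its identity \ref{eq:4}, with $q_1Ke^{(q_1-q_2)t}$ on the right — is justified only when $q_1\geq q_2$. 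You instead keep $\partial_t g$ exactly, using $\partial_t u^A(t_0^-,r)\leq \partial_t g(t_0,r)$ (from $u^A\geq g$ with equality at $t_0$) on one side and $\partial_t u^A=\partial_t g$ for $t>t_0$ on the other, which produces the consistent identity with $q_2Ke^{(q_1-q_2)t_0}$ on the right, valid for any sign of $q_1-q_2$. Second, the contradiction itself: the paper's convexity argument hinges on the claim $f(0)=0$ for its auxiliary function $f$, which fails as stated since $u^A(t,0)\geq Ke^{(q_1-q_2)t}>0$; your comparison — the nonlocal term is non-decreasing in $r$ because $\partial_r u^A\geq -1$, while the right-hand side is affine with strictly negative slope $-q_1$ — is more elementary and does not need that claim. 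Both arguments face the same genuinely delicate point, namely that $\partial_{rr}u^A$ must contribute nothing in the limit $t\to t_0$; the paper passes over it silently (``as $u^A$ is continuous in time''), whereas you flag it and sketch the standard repair (testing against $\phi\in\mathcal{C}_c^{\infty}$ supported in the gap, exactly the device the paper uses to prove $b(t)>0$), which works because $u^A(t,\cdot)$ converges uniformly on the gap to the affine function $g(t_0,\cdot)$. In short: same strategy, but your execution repairs two real defects in the paper's proof.
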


\begin{proof}

Since $t\rightarrow u^A(t,r)$ is nonincreasing, the function $t\rightarrow b(t)$ is nondecreasing. Let $t\in [0,T]$, we construct a decreasing sequence $(t_n)_{n\geq 1}$ such that $\lim_{n\rightarrow 0} t_n=t$. As $b(t)$ is nondecreasing then we have $\lim_{n\rightarrow 0} b(t_n) \geq b(t)$. We know that in the exercice region we have

\begin{align*}
u^A(t_n,b(t_n))=(Ke^{(q_1-q_2)t_n}-b(t_n)),
\end{align*}
as the functions above are continuous, then
\begin{align*}
u^A(t,\lim_{n\rightarrow 0}b(t_n))=(Ke^{(q_1-q_2)t}-\lim_{n\rightarrow 0} b(t_n)),
\end{align*}
then $\lim_{n\rightarrow 0} b(t_n) \leq b(t)$. So we proved the right-continuity.\\
We now prove that $b$ is left-continuous. let $\bar{b}(t)=\log b(t)$, we also have that $\bar{b}$ is nondecreasing and $\bar{b}(t^-)\leq \bar{b}(t)$. Suppose that $\bar{b}(t^-)< \bar{b}(t)$.\\
let $(s,x)\in (0,t)\times (\bar{b}(t^-),\bar{b}(t))$ then $x\geq \bar{b}(s)$ so
\begin{align*}
(\partial_t+\mathcal{L}^X -q_1)\tilde{u}^A(s,x)=0 \hspace{1cm} \forall (s,x)\in [0,t)\times (\bar{b}(t^-),\bar{b}(t))
\end{align*}
We know that $\tilde{u}^A$ is decreasing in time we deduce that for every $s \in(0,t)$:
\begin{align*}
(\mathcal{L} -q_1)\tilde{u}^A(s,x)\geq 0 \hspace{1cm} x\in (\bar{b}(t^-),\bar{b}(t))
\end{align*}
As $u^A$ is continuous in time then
\begin{align}\label{eq:1}
(\mathcal{L} -q_1)\tilde{u}^A(t,x)\geq 0 \hspace{1cm} x\in (\bar{b}(t^-),\bar{b}(t))
\end{align}
As $ x \in (\bar{b}(t^-),\bar{b}(t))$ we have $\tilde{u}^A(t,x)=Ke^{(q_1-q_2)t}-e^{x}$ then by substituing in \ref{eq:1} $\tilde{u}^A(t,x)$ by $Ke^{(q_1-q_2)t_n}-e^{x}$ we had for all $x \in (\bar{b}(t^-),\bar{b}(t))$
\begin{align*}
\int_{\mathbb{R}^2}\tilde{u}^A(t,x+y_1-y_2)-Ke^{(q_1-q_2)t}- e^{x+y_1-y_2}\nu(dy)
-q_1(Ke^{(q_1-q_2)t}-e^{x})\geq 0. \hspace{1cm}
\end{align*}
Let $(s,x)\in (t,T)\times (\infty,\bar{b}(t))$ then $\tilde{u}^A(s,x)=Ke^{(q_1-q_2)s}-e^{x}$ as we also know that $(\partial_s +\mathcal{L}-q_1)\tilde{u}^A(s,x)\leq 0$ then
\begin{align*}
(\mathcal{L}-q_1)\tilde{u}^A(s,x)\leq 0,
\end{align*}
or for all $(s,x)\in (t,T)\times (\infty,\bar{b}(t))$
\begin{align*}
\int_{\mathbb{R}^2}\tilde{u}^A(s,x+y_1-y_2)-Ke^{(q_1-q_2)s}+ e^{x+y_1-y_2}\nu(dy)
-q_1(Ke^{(q_1-q_2)s}-e^{x})\leq 0.\hspace{1cm}
\end{align*}
As ${u}^A$ is continuous, letting $s\rightarrow t$ then $\int_{\mathbb{R}^2}\tilde{u}^A(t,x+y_1-y_2)-Ke^{(q_1-q_2)t}- e^{x+y_1-y_2}\nu(dy)
-q_1(Ke^{(q_1-q_2)t}-e^{x})\leq 0$ for all $x\in (\infty,\bar{b}(t))$.\\ We can now conclude that for all $x\in (\bar{b}(t^-),\bar{b}(t))$
\begin{align*}
\int_{\mathbb{R}^2}\tilde{u}^A(t,x+y_1-y_2)-Ke^{(q_1-q_2)t}- e^{x+y_1-y_2}\nu(dy)
-q_1(Ke^{(q_1-q_2)t}-e^{x})= 0.\hspace{1cm}
\end{align*}
Setting $r=e^x$ then 
\begin{align}\label{eq:4}
\int_{\mathbb{R}^2}u^A(t,re^{y_1-y_2})-Ke^{(q_1-q_2)t}+ re^{y_1-y_2}\nu(dy)
-q_1(Ke^{(q_1-q_2)t}-r)= 0,\hspace{0.5cm} \forall x\in (b(t^-),b(t)).
\end{align}
We define $f(r)=\int_{\mathbb{R}^2}u^A(t,re^{y_1-y_2})+re^{y_1-y_2}\nu(dy)+q_1r= 0$ where $r\in [0,b(t))$. As $u^A(t,.)$ is continous and convex then $f$ is also continuous, convex and non negative. We had that $f(0)=0$ and $f>0$ in $ (b(t^-),b(t))$, then $f$ must be strictly increasing which contradicts \ref{eq:4} as we have supposed $b(t^-) < b(t)$.
\end{proof}
\noindent The following result characterizes the limit of the critical price $b(t)$ as t approaches $T$.
\begin{proposition}
we have $\lim_{t\rightarrow T} b(t) = S_0$
where $S_0$ the unique real number in the interval $(0,Ke^{(q_1-q_2)T})$ such that
\begin{align*}
\int_{\mathbb{R}^2}(Ke^{(q_1-q_2)T}-S_0e^{y_1-y_2})^+\nu(dy)=q_1(Ke^{(q_1-q_2)T}-S_0)
\end{align*}
\end{proposition}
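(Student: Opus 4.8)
The plan is to combine the monotonicity of the free boundary with a limiting analysis of the variational inequality \ref{eq:HJB} as $t\uparrow T$. Write $k(t)=Ke^{(q_1-q_2)t}$ for the time-dependent strike, so that the terminal datum is $u^A(T,r)=(k(T)-r)^+$. Since $t\mapsto u^A(t,\cdot)$ is non-increasing the function $t\mapsto b(t)$ is non-decreasing, and in the stopping region the identity $u^A(t,r)=k(t)-r\ge 0$ forces $b(t)\le k(t)\le Ke^{(q_1-q_2)T}$. Hence $b$ is monotone and bounded, so $L:=\lim_{t\uparrow T}b(t)=\sup_{t<T}b(t)$ exists with $0<L\le k(T)$, the strict positivity being the earlier result $b(t)>0$. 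It then remains to identify $L$ with $S_0$.

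First I would derive the equation solved by $L$. Fix $r<k(t)$ and apply the operator $\partial_t+\mathcal{L}^R-q_1$ to the intrinsic value $k(t)-r$, exploiting that it is affine in $r$: the diffusion term vanishes and the compensator $-r\partial_r u^A(e^{y_1-y_2}-1)$ exactly cancels the first-order part of the jump increment, so that only the set on which the post-jump state re-enters the continuation region survives, namely
\[ (\partial_t+\mathcal{L}^R-q_1)(k(t)-r)=\mathcal{D}(t,r)+\int_{\{re^{y_1-y_2}>b(t)\}}\big(u^A(t,re^{y_1-y_2})-(k(t)-re^{y_1-y_2})\big)\,\tilde{\nu}(dy), \]
where $\mathcal{D}(t,r)$ collects the affine drift contributions in $q_1,q_2$. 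On the stopping region $r<b(t)$ this quantity is $\le 0$ by \ref{eq:HJB}, while on the continuation region it vanishes. The decisive step is to let $t\uparrow T$: by Proposition \ref{prop4} the value $u^A$ is continuous with $\partial_r u^A\to-1$ at the boundary, and since $u^A(T,\cdot)=(k(T)-\cdot)^+$ the integrand collapses to $(k(T)-re^{y_1-y_2})^+$ on the relevant set. Evaluating at $r=L$ and rearranging yields
\[ \int_{\mathbb{R}^2}\big(Ke^{(q_1-q_2)T}-L\,e^{y_1-y_2}\big)^+\nu(dy)=q_1\big(Ke^{(q_1-q_2)T}-L\big). \]

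To conclude $L=S_0$ I would run a two-sided comparison and then an argument of uniqueness. Set $F(r)=\int_{\mathbb{R}^2}(k(T)-re^{y_1-y_2})^+\nu(dy)-q_1(k(T)-r)$ on $(0,k(T))$. For $r<L$ the point $(t,r)$ lies in the stopping region for all $t$ near $T$, so the displayed quantity is $\le 0$ and the limit gives $F(r)\le 0$; for $r\in(L,k(T))$ the point lies in the continuation region and the limit gives $F(r)\ge 0$. Thus $F$ changes sign at $L$, i.e. $F(L)=0$. Uniqueness then follows because each $r\mapsto(k(T)-re^{y_1-y_2})^+$ is convex and non-increasing, so $F$ is continuous and convex with $F(k(T))=k(T)\int(1-e^{y_1-y_2})^+\nu(dy)>0$; a convex function with this sign pattern (non-positive on $(0,L]$, positive at the right endpoint) has a single zero in $(0,k(T))$, which must be $S_0$. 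Hence $L=S_0$.

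The main obstacle is the limit interchange in the jump integral as $t\uparrow T$: one must justify passing $\lim_{t\uparrow T}$ inside $\int\cdots\tilde{\nu}(dy)$, which needs a dominating function supplied by the integrability condition \ref{condition} together with the global Lipschitz bound on $u^A$ in $x$ from Proposition \ref{prop4}, and one must control $\partial_t u^A$, which is only one-sidedly and locally bounded near $T$ and whose limit across the free boundary need not be continuous. A secondary difficulty is that the characterizing integral is finite only under the standing integrability hypothesis \ref{condition}, which is also what makes $F$ well defined and convex for the uniqueness step.
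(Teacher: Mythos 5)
Your route is the same one the paper takes --- monotone existence of $L=\lim_{t\uparrow T}b(t)$, one-sided passage to the limit in the variational inequality \ref{eq:HJB} on each side of the boundary, and identification of $L$ as the zero of the resulting expression --- and your attention to the limit interchange and to uniqueness addresses points the paper passes over in silence. But the decisive step of your argument is false. Writing $k(t)=Ke^{(q_1-q_2)t}$ as in your proposal, your stopping-region decomposition is correct, including the drift $\mathcal{D}(t,r)=\partial_t\big(k(t)-r\big)-q_1\big(k(t)-r\big)=q_1r-q_2k(t)$; the problem is the asserted collapse of the jump integrand. Since $u^A(T,\cdot)=(k(T)-\cdot)^+$, the integrand $u^A(t,z)-\big(k(t)-z\big)$, $z=re^{y_1-y_2}$, converges as $t\uparrow T$ to
\begin{align*}
\big(k(T)-z\big)^+-\big(k(T)-z\big)=\big(z-k(T)\big)^+,
\end{align*}
the \emph{call-type overshoot}, not to $(k(T)-z)^+$: for $z\in\big(L,k(T)\big)$ the true limit is $0$ while $(k(T)-z)^+>0$. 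Consequently neither of your one-sided limits says anything about your function $F$. What the stopping side actually yields for $r<L$ is
\begin{align*}
\int_{\mathbb{R}^2}\big(re^{y_1-y_2}-k(T)\big)^+\,\tilde{\nu}(dy)\;\le\; q_2k(T)-q_1r,
\end{align*}
and the continuation side (after invoking $\partial_t u^A\le 0$, the only available way to dispose of the time derivative there) yields, for $r\in(L,k(T))$, the same left-hand side $\ge q_1\big(k(T)-r\big)$. The claimed sign change of $F$ at $L$ is therefore unsupported, and the proposition's identity does not follow; note also that your own drift term gives $\mathcal{D}(T,L)=q_1L-q_2k(T)$, which equals the $-q_1(k(T)-L)$ needed for your final rearrangement only if $q_1=q_2$. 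Indeed the two correct one-sided bounds differ by the constant $(q_2-q_1)k(T)$ --- because $\partial_t u^A$ equals $(q_1-q_2)k(t)$ exactly on the stopping region but is only known to be $\le 0$ on the continuation region --- so this comparison alone pins down $L$ only when $q_1=q_2$. The genuine near-maturity equation, obtained by reducing to a constant-strike put via $\hat R_t=e^{(q_2-q_1)t}R_t$ and applying the result of Lamberton and Mikou \cite{Lamberton} with discount rate $q_2$ and dividend rate $q_1$, is
\begin{align*}
\int_{\mathbb{R}^2}\big(Le^{y_1-y_2}-k(T)\big)^+\,\tilde{\nu}(dy)=q_2k(T)-q_1L
\end{align*}
(when $q_2-q_1-\int(e^{y_1-y_2}-1)^+\tilde{\nu}(dy)<0$; otherwise $L=k(T)$), which differs from the equation in the statement both in the direction of the positive part and in the drift coefficient.

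For what it is worth, you could not have reached the stated equation by a correct argument, because the paper's own proof commits the same switch: its displayed inequalities carry $(e^{x+y_1-y_2}-Ke^{(q_1-q_2)T})^+$ on both sides of the boundary, and the final ``thus'' silently replaces this by $(Ke^{(q_1-q_2)T}-e^{x+y_1-y_2})^+$, besides discarding $\partial_t\tilde u^A=(q_1-q_2)Ke^{(q_1-q_2)t}$ on the stopping region. Two smaller gaps in your write-up: (i) the chain $b(t)\le k(t)\le Ke^{(q_1-q_2)T}$ requires $q_1\ge q_2$, whereas the monotonicity of $t\mapsto u^A(t,\cdot)$ that you (and the paper) invoke forces $q_1\le q_2$ on the stopping region, so the $q_1,q_2$ bookkeeping cannot be waved through; (ii) your uniqueness argument is incomplete, since a convex function that is $\le 0$ on $(0,L]$ and positive at the right endpoint may vanish on an entire interval --- by contrast, for the correct equation uniqueness is immediate, the right-hand side $q_2k(T)-q_1r$ being strictly decreasing in $r$ and the left-hand side nondecreasing.
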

\begin{proof}Let $t\in (0,T)$. Define $b(T ) = \lim_{t\rightarrow T} b(t)$ and $\bar{b}(T ) = \log b(T )$. We clearly have $b(T ) \leq Ke^{(q_1-q_2)T}$.
Let $x>log (b(t))$ then we have $(\partial_t +\mathcal{L}^X-q_1)\tilde{u}^A(t,x)=0$ as $u$ is nonincreasing in time then $(\mathcal{L}^X-q_1)\tilde{u}^A(t,x)\geq 0$. Note that $(\mathcal{L}^X-q_1)\tilde{u}^A(t,x)\xrightarrow[t\rightarrow T]{} (\mathcal{L}^X-q_1)(Ke^{(q_1-q_2)T}-e^x)^+ $ (in distribution sens) then for all $x>logb(T)$ 
\begin{align*}
\mathcal (\mathcal{L}^X-q_1) (Ke^{(q_1-q_2)T}-e^x) \geq 0,
\end{align*}
so when $x \in (-\infty , log(K)+(q_1-q_2)T)$ we have
\begin{align*}
\int_{\mathbb{R}^2}(Ke^{(q_1-q_2)T}-e^{x+y_1-y_2})^+-Ke^{(q_1-q_2)T}+ e^{x+y_1-y_2})\nu(dy)-q_1(Ke^{(q_1-q_2)T}-e^x)\geq 0.
\end{align*}
then 
\begin{align*}
\int_{\mathbb{R}^2}(e^{x+y_1-y_2}-Ke^{(q_1-q_2)T})^+-q_1(Ke^{(q_1-q_2)t}-e^x\geq 0.
\end{align*}
Otherwise if $x<log (b(t))$, we know that  $\partial_t \tilde{u}^A+\mathcal{L}^X\tilde{u}^A-q_1\tilde{u}^A\leq 0$ and $\tilde{u}^A(t,x)=Ke^{(q_1-q_2)t}-e^x$
\begin{align*}
\int_{\mathbb{R}^2}u(t,x+y_1-y_2)+e^{x+y_1-y_2}-Ke^{(q_1-q_2)T}\nu(dy)-q_1(Ke^{(q_1-q_2)t}-e^x)\leq 0. 
\end{align*}
By letting $t\rightarrow T$ the for $x<log (b(T))$ we had 
\begin{align*}
\int_{\mathbb{R}^2}(Ke^{(q_1-q_2)T}-e^{x+y_1-y_2})^++e^x\big( e^{y_1-y_2})-Ke^{(q_1-q_2)T}\nu(dy)-q_1Ke^{(q_1-q_2)t}-e^x)\leq 0,
\end{align*}
thus
\begin{align*}
\int_{\mathbb{R}^2}(Ke^{(q_1-q_2)T}-e^{x+y_1-y_2})^+\nu(dy)-q_1Ke^{(q_1-q_2)t}-e^x)\leq 0. 
\end{align*}
Then we conclude that if $x=\log b(T)$
then
\begin{align*}
\int_{\mathbb{R}^2}(Ke^{(q_1-q_2)T}-b(T)e^{y_1-y_2})^+\nu(dy)=q_1(Ke^{(q_1-q_2)T}-b(T)).
\end{align*}

\end{proof}

\subsubsection{An approximate formula for an
American exchange option}
As an analytic solution for a European exchange option is known in \ref{eq:5}, the remaining problem is to derive a good approximation for the early exercise premium.
\begin{proposition}
There exists an $\alpha< 0$  such that
\begin{align*}C(S_{1,t},S_{2,t},t,T)=
\begin{cases}
&c(S_{1,t},S_{2,t},t,T)+S_{2.t}e^{(q_2-q_1)t}h(T-t)A(h)r^\alpha \text{ if  }r>b(t),\\
&S_{2.t}e^{(q_2-q_1)t}(Ke^{(q_1-q_2)t}-r) \text{ if  } r\leq b(t)
\end{cases}
\end{align*}
Where 
\begin{align*}
r=e^{(q_1-q_2)t} \frac{S_{1.t}}{S_{2.t}}
\end{align*}
$h(t)=1-e^t$, and $A(h)$ with the early‐exercise boundary b(t) satisfies:
\begin{align*}
&A(h)=-\frac{1+\partial_r u^E(t,b(t))}{h(T-t)\alpha b(t)^{\alpha-1}}\\
&u^E(t,b(t))-\frac{b(t)}{\alpha h(T-t)}(1-\partial_r u^E(t,b(t)))=Ke^{(q_1-q_2)t}-b(t) .
\end{align*}
\end{proposition}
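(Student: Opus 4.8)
The plan is to adapt the Barone--Adesi--Whaley quadratic approximation to the present jump setting. Working with the reduced value $u^A(t,r)$, recall from the decomposition Proposition that $u^A=u^E+p$, where the early-exercise premium $p$ vanishes in the stopping region and, on the continuation region $\{r>b(t)\}$, satisfies the homogeneous equation $\partial_t p+\mathcal{L}^R p-q_1 p=0$ inherited from the fact that both $u^A$ and $u^E$ solve $(\partial_t+\mathcal{L}^R-q_1)(\cdot)=0$ there. Writing $\tau=T-t$ and making the separation ansatz $p(t,r)=h(\tau)\,g(r,h)$ with $h=h(\tau)$, I would substitute into this PIDE.

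Since $\partial_t p=-\partial_\tau p=-h'(\tau)\big(g+h\,\partial_h g\big)$ and $\mathcal{L}^R$ acts only in $r$, dividing through by $h$ gives
\[
\mathcal{L}^R g-q_1 g-\frac{h'(\tau)}{h}\big(g+h\,\partial_h g\big)=0 .
\]
The key approximation --- and the main obstacle to make rigorous --- is to neglect the term $h\,\partial_h g$, which is the step that turns the parabolic PIDE into an ordinary integro-differential equation in $r$ alone; this is justified heuristically because $h\,\partial_h g$ is of lower order near maturity ($h\to 0$). The resulting equation $\mathcal{L}^R g-\big(q_1+h'/h\big)g=0$ is solved by powers: testing $g=r^\alpha$ and using
\[
\mathcal{L}^R r^\alpha=\Psi(\alpha)\,r^\alpha,\qquad
\Psi(\alpha)=\tfrac{\sigma^2}{2}\alpha(\alpha-1)+\int_{\mathbb{R}^2}\big(e^{\alpha(y_1-y_2)}-1-\alpha(e^{y_1-y_2}-1)\big)\tilde{\nu}(dy),
\]
reduces everything to the characteristic equation $\Psi(\alpha)=q_1+h'/h$, whose root $\alpha=\alpha(h)$ therefore depends on time to maturity.

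I would then show this characteristic equation has a negative root. The symbol $\Psi$ is strictly convex, since $\Psi''(\alpha)=\sigma^2+\int_{\mathbb{R}^2}(y_1-y_2)^2 e^{\alpha(y_1-y_2)}\tilde{\nu}(dy)>0$, and one checks directly that $\Psi(0)=\Psi(1)=0$, so $\Psi<0$ on $(0,1)$ and $\Psi\to+\infty$ at $\pm\infty$. Because the right-hand side $q_1+h'/h$ is positive, the level set $\{\Psi=q_1+h'/h\}$ consists of exactly one root $\alpha<0$ and one root $>1$; the requirement that the premium decay as $r\to\infty$ (the payoff $(Ke^{(q_1-q_2)t}-r)^+$ vanishes for large $r$) forces the choice $\alpha<0$, giving $p(t,r)=h(T-t)A(h)r^\alpha$.

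Finally the constant $A(h)$ and the boundary $b(t)$ are pinned down by the two contact conditions at $r=b(t)$. Smooth pasting $\partial_r u^A(t,b(t))=-1$ from Proposition~\ref{prop4} gives $\partial_r u^E(t,b(t))+h(T-t)A\,\alpha\,b(t)^{\alpha-1}=-1$, i.e. the first displayed identity $A(h)=-\big(1+\partial_r u^E(t,b(t))\big)/\big(h(T-t)\alpha b(t)^{\alpha-1}\big)$; value matching $u^A(t,b(t))=Ke^{(q_1-q_2)t}-b(t)$ gives $u^E(t,b(t))+h(T-t)A\,b(t)^{\alpha}=Ke^{(q_1-q_2)t}-b(t)$, and eliminating $A$ between the two yields the transcendental equation determining $b(t)$. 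Besides the heuristic nature of dropping $h\,\partial_h g$, the remaining point requiring care is the existence and uniqueness of a solution $b(t)\in(0,Ke^{(q_1-q_2)t})$ to this last equation, which one argues from monotonicity in $b$ together with the boundary behaviour already obtained in the preceding propositions, in particular the maturity limit of $b$.
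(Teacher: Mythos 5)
Your proposal follows essentially the same route as the paper's proof: the decomposition $u^A=u^E+u^P$, the MacMillan/Barone--Adesi--Whaley ansatz $u^P=h(T-t)\,g(h,\cdot)$ with the $\partial_h g$ term dropped heuristically, the power solution $g=A(h)r^\alpha$ leading to the identical characteristic equation $\Psi(\alpha)=q_1+h'/h$, a convexity argument for the unique negative root, and value matching plus smooth pasting at $r=b(t)$ to determine $A(h)$ and the boundary equation. Your observation that $\Psi(0)=\Psi(1)=0$ is a slightly sharper way of locating the roots than the paper's check that $f(0)=-q_1-h'/h<0$, but it is the same convexity argument in substance.
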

\begin{proof}
Following proposition 5 we know that
\begin{align}\label{eq:6}
u^A(t,x)=u^E(t,x)+u^P(t,x)
\end{align}
the remaining problem is to derive a good approximation for the early exercise premium.  Given the linearity of \ref{eq:6}, the
early‐exercise premium, $u^P$, must satisfy the Equation: 
 \begin{align}\label{eq:up}
\begin{cases}
\partial_t u^P(t,r) dt+\frac{\sigma^2r^2}{2} \partial_{rr}u-q_1u^P(t,r)\\
+\int_{\mathbb{R}^2}u(t,re^{y_1-y_2})-u(t,r)-\partial_r u(t,r)(e^{ y_1-y_2}-1)\tilde{\nu}(dy)\text{ for } r>  b(t)\\
u^p(t,r)=Ke^{(q_1-q_2)t}-r-u^E(t,x)\text{ for } r\leq b(t)\\
u^p(t,r) \text{ is continuous }\\
\frac{du^p}{dr} \text{ is continuous }\\
\lim_{r\to \infty} u^p(t,r)=0
\end{cases}
\end{align}
Following the method of MacMillan \cite{MacMillan}, we replace the variable t by $h(t)=1-e^{-t}$ and we rewrite $u^P(t,x)$ in the form: $u^P(t,x)=h(T-t)g(h(T-t),x)$, Substituting this
structural form into \ref{eq:up}, we have that
\begin{align*}
(1-h(T-t))\big(\frac{g}{h}-\frac{\partial g}{\partial h}\big)=(\mathcal{L}^R-q_1)g,
\end{align*}
As described in Barone‐Adesi and Whaley \cite{Barone}, $(1-h(T-t))\frac{g}{h}-\frac{\partial g}{\partial h}$ is negligible since $1- h(T- t)$ approaches 0 when $T- t$ is small. Consequently, we ignore the term $(1-h(T-t))\frac{g}{h}-\frac{\partial g}{\partial h}$\\
The approximate equation for $g$ now becomes 
\begin{align*}
\frac{h'(T-t)}{h(T-t)}g=\big(\mathcal{L}^R-q_1\big)g,
\end{align*}
or in logarithmic representation
\begin{align}\label{eq:2}
\frac{h'(T-t)}{h(T-t)}g=\big(\mathcal{L}^X-q_1\big)g.
\end{align}
We remark that the time t is not an explicit variable in right side of Equation \ref{eq:2}. Instead, t appears only in $\frac{h'(T-t)}{h(T-t)}$which is the coefficient of term $g$ in the left‐hand side. This implies that $t$, or equivalently $T- t$, is
treated as a parameter. We assume the form of the solution for $g$ in the continuation region is
\begin{align*}
g(h,r)=A(h)r^{\alpha}.
\end{align*}
 In logarithmic representation we had the following form:
\begin{align*}
g(h,x)=A(h)e^{\alpha x}.
\end{align*}
Moreover, we need that $\alpha$ to be negative because $u^P(t,r)\to 0$ as $r\to \infty$ . Substituing $g$ in \ref{eq:2} by $A(h)e^{\alpha x}$ we have
\begin{align*}
\frac{\sigma^2}{2}\alpha^2 -\Big(\frac{\sigma^2}{2}+\int_{\mathbb{R}^2}e^{y_1-y_2}-1 \text{ }
 \tilde{\nu}(dy)\Big)\alpha+\int_{\mathbb{R}^2}e^{\alpha (y_1-y_2)} -1\text{ }\tilde{\nu}(dy) -q_1=\frac{h'(T-t)}{h(T-t)}.
\end{align*}
Let 
\begin{align*}
f(\alpha)=\frac{\sigma^2}{2}\alpha^2 -\Big(\frac{\sigma^2}{2}+\int_{\mathbb{R}^2}e^{y_1-y_2}-1 \text{ }
 \tilde{\nu}(dy)\Big)\alpha+\int_{\mathbb{R}^2}e^{\alpha (y_1-y_2)} -1\text{ }\tilde{\nu}(dy) -q_1-\frac{h'(T-t)}{h(T-t)}
\end{align*}
we have that
\begin{align*}
f'(\alpha)&=\sigma^2\alpha -(\frac{\sigma^2}{2}+\int_{\mathbb{R}^2}e^{(y_1-y_2)}\tilde{\nu}(dy)+1)  +\int_{\mathbb{R}^2}(y_1-y_2)e^{\alpha(y_1-y_2)}\tilde{\nu}(dy)\\
f''(\alpha)&=\sigma^2 +\int_{\mathbb{R}^2}(y_1-y_2)^2e^{\alpha(y_1-y_2)}\tilde{\nu}(dy)>0
\end{align*}
This shows that the function $f$ is convex. We have $f(0)=-q_1-\frac{h'(T-t)}{h(T-t)}<0$ so as $\lim_{\alpha \rightarrow \pm \infty }f(\alpha)=+\infty$ we can say that we have one unique negative solution to $f=0$.\\
Because $u^P(t,r)$ and $\partial_x u^P(t,r)$ are continuous
\begin{align*}
&u^E(t,b(t))+h(T-t)A(h) b(t)^{\alpha}=Ke^{(q_1-q_2)t}-b(t) \\
&\alpha h(T-t)A(h)b(t)^{\alpha-1}=-1-\partial_r u^E(t,b(t)).
\end{align*}
Thus
\begin{align*}
&A(h)=-\frac{1+\partial_r u^E(t,b(t))}{h(T-t)\alpha b(t)^{\alpha-1}}\\
&u^E(t,b(t))-\frac{b(t)}{\alpha h(T-t)}(1-\partial_r u^E(t,b(t)))=Ke^{(q_1-q_2)t}-b(t) .
\end{align*}
\end{proof}
\section{Summary and conclusion}
In this article, we first present a closed‐form solution for the value of a European exchange option in a jump‐diffusion
model and L\'evy model by reducing the dimension of the model by a change of measure. American exchange option price can also be represented as the sum of the
price of the corresponding European exchange option price
and an early exercise premium, similar to the findings of Cheang
and Chiarella \cite{Cheang} in Jump diffusion case, however Cheang and Chiarella did not show the regularity of American option to justify the use of Ito lemma. We were also able to show different properties of the free boundary thanks to dimension reduction. Finally we give an approximate formula of an American exchange option.
 
\end{document}